\documentclass[a4paper,accepted=2022-08-17]{quantumarticle} 

\pdfoutput=1 
\usepackage{amsmath,amsthm,amsfonts,amssymb,amscd,amsbsy}
\usepackage{graphicx}
\usepackage{array}
\usepackage[numbers]{natbib}
\usepackage{enumerate}
\usepackage[nodayofweek]{datetime}
\usepackage[english]{babel}
\usepackage{mathtools}
\usepackage[toc,page]{appendix}
\usepackage{verbatim} 
\usepackage{amsthm} 
\usepackage{enumitem}
\usepackage{enumerate}
\usepackage{bbm} 
\usepackage[normalem]{ulem} 
\usepackage{bm}
\usepackage{authblk}

\usepackage[utf8]{inputenc}
\usepackage[pdfpagelabels]{hyperref}

\usepackage{mathrsfs}
\usepackage{amsmath}
\usepackage{amssymb}
\usepackage{amsfonts}
\usepackage{amsthm,bbm}
\usepackage{graphicx}
\usepackage[margin=3cm,]{geometry}

\makeatletter
\@addtoreset{equation}{myequation}
\makeatother

\usepackage{hyperref}
\hypersetup{colorlinks=true,allcolors=[rgb]{0,0,0.6}}

\newtheorem{thm}{Theorem}

\newtheorem{cor}[thm]{Corollary}

\theoremstyle{definition}

\newtheorem{rmk}[thm]{Remark}

\newcommand{\be}{\begin}
\newcommand{\e}{\end}
\newcommand{\beq}{\begin{equation}}
\newcommand{\eeq}{\end{equation}}

\numberwithin{equation}{section}
\numberwithin{thm}{section}


\newcommand{\Z}{{\mathbb Z}}

\newcommand{\C}{{\mathbb C}}

\newcommand{\de}{{\delta}}
\newcommand{\eps}{{\varepsilon}}

\newcommand{\Lam}{{\Lambda}}
\newcommand{\gam}{{\gamma}}

\newcommand{\De}{\Delta}

\renewcommand{\l}{\left}
\renewcommand{\r}{\right}

\newcommand{\Hm}[1]{\leavevmode{\marginpar{\tiny%
$\hbox to 0mm{\hspace*{-0.5mm}$\leftarrow$\hss}%
\vcenter{\vrule depth 0.1mm height 0.1mm width \the\marginparwidth}%
\hbox to 0mm{\hss$\rightarrow$\hspace*{-0.5mm}}$\\\relax\raggedright
#1}}}


 \usepackage{amssymb}
 \usepackage{comment}
 \usepackage{graphics}
\usepackage{bbm}



   \begin{document}

 \title{Random translation-invariant Hamiltonians and their spectral gaps} 
 \author{Ian Jauslin}
 \affiliation{Department of Mathematics, Rutgers University, Piscataway, NJ 	08854, USA}
 \author{Marius Lemm}
 \affiliation{Department of Mathematics, University of Tübingen,  72076 Tübingen, Germany}

  \date{August 21, 2022}

 \begin{abstract}
We consider random translation-invariant frustration-free quantum spin Hamiltonians on $\Z^D$ in which the nearest-neighbor interaction in every direction is randomly sampled and then distributed across the lattice. Our main result is that, under a small rank constraint, the Hamiltonians are automatically frustration-free and they are gapped with a positive probability. This extends previous results on 1D spin chains to all dimensions. The argument additionally controls the local gap. As an application, we obtain a 2D area law for a cut-dependent ground state via recent AGSP methods of Anshu-Arad-Gosset.
\end{abstract}

	 	 \maketitle

\section{Introduction}
In quantum many-body physics, the existence of a spectral gap above the ground state has massive consequences for ground state correlation and entanglement properties \cite{arad2017rigorous,hastings2007area,hastings2006spectral,nachtergaele2006lieb}. The closing of a spectral gap is also intimately connected to the occurrence of topological quantum phase transitions, since the modern definition of a quantum phases relies on the existence of a path of gapped Hamiltonians through Hastings' notion of quasi-adiabatic evolution \cite{bachmann2012automorphic,hastings2004lieb,nachtergaele2019quasi}. The stability of spectral gaps under various ``local'' perturbations of the Hamiltonian is an active research area \cite{bravyi2010topological,del2021lie,de2019persistence,michalakis2013stability,nachtergaele2021stability}, and to make use of these stability results it is of course beneficial to have a wide net of gapped Hamiltonians available for further stability analysis. Generally, questions concerning spectral gaps are at the heart of many of the most challenging open problems in physics. Two examples are Haldane's conjecture that the antiferromagnetic Heisenberg chain is gapped for integer spin values \cite{haldane1983continuum,haldane1983nonlinear}. and the Yang-Mills mass gap, a millennium problem. For additional background about the relevance of spectral gaps, see \cite{lemm2019spectral,nachtergaele2019quasi}.

Given the fact that the existence of a spectral gap has strong physical implications, there has been significant interest in identifying mathematical techniques for deriving a spectral gap rigorously. It has been found that with very few exceptions, only special frustration-free Hamiltonians are amenable to a rigorous analysis of the spectral gap. We recall that frustration-free Hamiltonians play a central role in quantum many-body theory, for example for classification of topological phases of quantum matter and for quantum error correcting codes. For a highly incomplete list of examples, we mention that the toric code and Levin--Wen models are paradigms of topological quantum computation \cite{kitaev2006anyons,levin2005string}, AKLT ground states \cite{affleck1988valence} are universal resource states for measurement-based quantum computation  \cite{miyake2011quantum,verstraete2004valence,wei2011affleck,wei2014hybrid}, Motzkin Hamiltonians provided the first examples with volume-entanglement \cite{movassagh2016supercritical,zhang2017novel}, and product vacua with boundary states (PVBS) \cite{bachmann2014product,bachmann2015product,bishop2016spectral} provided hands-on  models for topological quantum phase transitions. Frustration-free Hamiltonians also arise generally as the parent Hamiltonians of matrix product states (MPS) or projected entangled pair states (PEPS) and are thus central to the study of tensor network states. The main reason for their usefulness is that they are among the few many-body Hamiltonians that can be rigorously analyzed. 

The seminal work of Affleck-Kennedy-Lieb-Tasaki \cite{affleck1988valence} derived the spectral gap for their eponymous (frustration-free) AKLT chain. This has led to a large number of follow-up works which also focus on special frustration-free Hamiltonians which can be analyzed by one of the standard methods, the martingale method \cite{bachmann2014product,bachmann2015product,bishop2016spectral,nachtergaele1996spectral} or finite-size criteria \cite{abdul2020class,lemm2019gapped,lemm2020existence,lemm2019aklt,pomata2019aklt,pomata2020demonstrating}, or their combination, \cite{nachtergaele2021spectral,warzel2022bulk}.

While the study of these specific frustration-free Hamiltonians leads to numerous insights, it leaves open the physically very relevant question whether generic frustration-free Hamiltonians are gapped or not. Part of the motivation for asking this question is that random Hamiltonians are considered key models for quantum chaos. As a consequence, a new avenue of research \cite{bravyi2015gapped,lemm2019gaplessness,movassagh2017generic} has recently emerged which asks the question whether gaps are common or uncommon.  

It is worth pointing out that there exist heuristic physical arguments both for and against the commonality of gaps: On the one hand, as reviewed above, gaps place extremely strong constraints on ground states which suggests they are rare in some sense. On the other hand, the closing of the gap is associated with a quantum phase transition, which suggests that most systems should be gapped (i.e., not at the brink of a quantum phase transition). To further highlight that this is a subtle problem, we recall that Movassagh \cite{movassagh2017generic} proved that spatially disordered Hamiltonians are generically gapless in any dimension. Movassagh's argument fundamentally relied on the decoupling of rare local regions from the bulk of the system and thus fundamentally on the non-translation-invariance of the Hamiltonians. A natural follow-up question was then what happens in the translation-invariant setting where this mechanism is absent. Concretely, we would like to address the following  question:\\
 
\textit{Does a typical translation-invariant frustration-free local Hamiltonian have a spectral gap?}\\
 
Here the word ``typical'' refers to an underlying probability measure on the space of translation-invariant, nearest-neighbor Hamiltonians which is made precise in Subsection \ref{ssect:model} below. The main model parameters are
$$
\begin{aligned}
D=&\textnormal{spatial lattice dimension},\\
 d=&\textnormal{local qudit dimension},\\ 
 r=& \textnormal{ rank of interaction}.
\end{aligned}
$$

The Hamiltonians are normalized to have local ground state energy zero and so the rank $r$ equals $d^2-\textnormal{local ground state dimension}$. We will be interested in models of sufficiently small rank $r$, or equivalently, of sufficiently large \textit{local ground space dimension}. The point is that choosing a sufficiently small $r$ is a way to ensure frustration-freeness which is well-known for line graphs \cite{movassagh2010unfrustrated}.

Indeed, this makes intuitive sense: A small rank is the same as a large local ground space dimension. This makes it easier for the local ground states to overlap sufficiently to obtain a global ground state which does not frustrate any local interaction. As a consequence, the well-known examples frustration-free Hamiltonians have only moderate rank $r$ compared to the maximal allowed rank $d^2$ for nearest-neighbor interactions. For example, the AKLT chain has $r=5$ and $d^2=9$, Motzkin Hamiltonians have $r=3$ and $d^2=9$, and product vacua with boundary states (PVBS) have $r=(d-1)(d+2)/2$. This heuristic beckons the question whether small $r$ is sufficient to ensure frustration-freeness by a general argument. Here, we will give a general, rigorous proof that the resulting Hamiltonians are always frustration-free if $r$ is sufficiently small. See Lemma \ref{lm:FF} for the precise mathematical statement. This relies on a QSAT criterion from \cite{sattath2016local} and the rigorous Kotecky-Preiss criterion for the convergence of the cluster expansion from classical statistical mechanics \cite{kotecky1986cluster}.  

Regarding the experimental realizability of the small rank constraint, we first recall that a standard setup to realize quantum spin systems are cold quantum gases, e.g., constrained by optical lattices. A common way to increase single-bond ground state degeneracy (i.e., to lower the interaction rank) is to coarse-grain the Heisenberg antiferromagnetic two-spin interaction by introducing the higher-order coupling $(\vec{S}_x\cdot \vec{S}_y)^2$ to obtain a spectral projector as is done, e.g., in the AKLT Hamiltonian. Orbital-dependent Hubbard models are expected to display these higher-order couplings and hence e.g.\ the AKLT Hamiltonian experimentally \cite{koch2015affleck}. While it can be challenging to realize small interaction ranks in the laboratory, this challenge is shared among almost all frustration-free Hamiltonians. In summary, progress on the ongoing program of realizing frustration-free Hamiltonians in the laboratory will likely translate to the Hamiltonians considered here. A particular advantage of the construction we provide is that it is flexible about the underlying graph structure and can thus be readily adapted to pertinent experimental constraints.

With Lemma \ref{lm:FF} at hand, we have produced a natural model of translation-invariant frustration-free nearest-neighbor Hamiltonians. The next step is to analyze the typical spectral gap. Our main result says that these random Hamiltonians are gapped with positive probability independent of system size (Theorem \ref{thm:main}). Together with \cite{movassagh2017generic}, which treated the spatially disordered case, this proves rigorously that gaps are strictly more common in the presence of translation-invariance. 

As reviewed below, our first result is the first that applies to graphs of dimension $>1$. Previous works \cite{bravyi2015gapped,lemm2019gaplessness} in the frustration-free setting were restricted to quantum spin chains, i.e., line graphs.\\ 

Our result paves the way to a broader understanding of the gaps of frustration-free Hamiltonians and their ground state properties. Given that these ground states and their gaps are central tools for the modern definition of quantum phases, understanding their landscape is of clear theoretical relevance. In addition, gaps have significant practical implications for the lifetimes of quantum states, e.g., for quantum memory or measurement-based quantum computation. For example, AKLT ground states on various graphs are universal resource states for measurement-based quantum computation \cite{miyake2011quantum,verstraete2004valence,wei2011affleck,wei2014hybrid} which are known to be protected by a gap since recently \cite{abdul2020class,lemm2020existence,pomata2019aklt,pomata2020demonstrating}. Our result raises the possibility of widely expanding the class of gap-protected universal resource states for MBQC by analyzing the ground states of the gapped Hamiltonians we identify here. We would like to mention that the method we use to prove the main result Theorem \ref{thm:main} not only yields positive probability of the spectral gap, it also identifies the types of local interactions that produce a gap. This is important for potentially realizing the Hamiltonians we find here physically. The details can be found in Remark \ref{rmk:main} (i).

An advantage of our method (finite-size criteria) is that it also gives a system size-independent lower bound on the \textit{local gap}. The local gap has recently received attention because it is needed to obtain the 2D area law in the recent breakthrough of Anshu-Arad-Gosset \cite{anshu2022area}. The problem in general is that the local gap can in principle 
be much smaller than the global gap. Here, we are able to prove that this does not happen by deriving a lower bound on the local gap of our random Hamiltonians. Consequently, we are able to provide for the first time a broad class of examples of 2D Hamiltonians satisfying the area law. See Subsection \ref{ssect:ALapplication} for the details.

\section{Model an main results}

\subsection{The random Hamiltonians} 
\label{ssect:model}
We consider a nearest-neighbor Hamiltonian acting on a system of qudits, described by local Hilbert spaces $\C^d$, which are placed on a $D$-dimensional discrete box $\Lam_L=((-L,L]\cap \Z)^D$ of side length $2L$. The total Hilbert space is $\bigotimes_{x\in \Lam_L} \C^d$ and the Hamiltonian is given by
\beq\label{eq:HLdefn}
H_{\Lam_L}=\sum_{x\in \Lam_L} \sum_{j=1}^D P^j_{x,x+e_j}
\eeq
with periodic or open boundary conditions. We say that the Hamiltonian is translation-invariant if all the local interactions pointing in the same direction act identically. Formally, this means we have
$$
\begin{aligned}
P^j_{x,x+e_j}=P^j\otimes \mathrm{Id}_{\Lam_L\setminus\{x,x+e_j\}},\\
 \textnormal{for all } x\in\Lam_L \textnormal{ and all } j=1,\ldots,D
\end{aligned}
$$
with fixed prototypes $P^1,\ldots,P^D:\C^{d}\otimes \C^{d}\to\C^{d}\otimes \C^{d}$, one for every direction; see Figure \ref{fig:grid}. Note that we do not assume isotropy of the local interaction, i.e., the prototypes $P^1,\ldots,P^D$ may differ for different directions. We assume that the local interaction is an orthogonal projection of rank $r$.

\begin{figure}[t]
\begin{center}
\includegraphics[width=75mm]{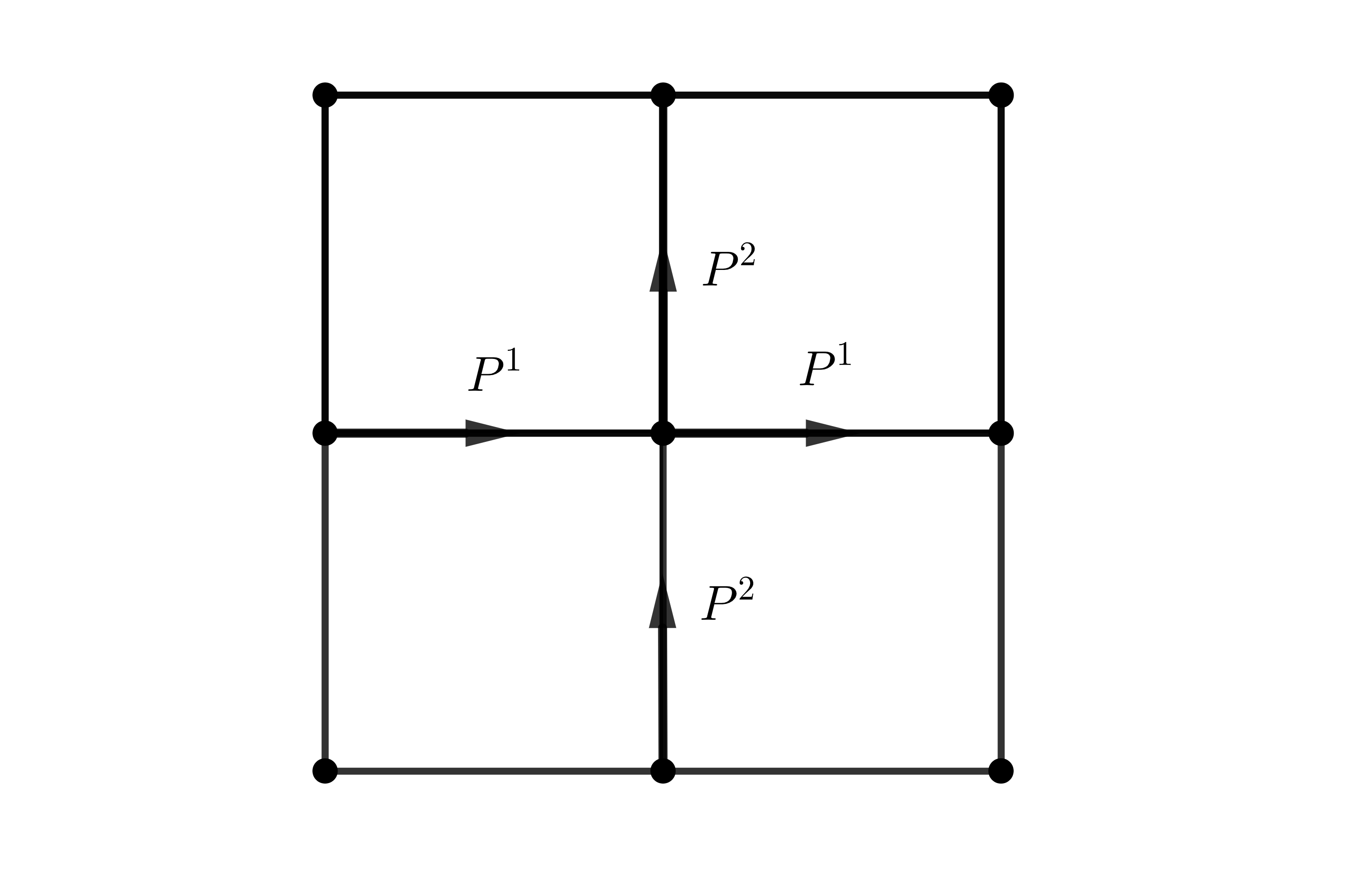}
\caption{The Hamiltonian \eqref{eq:HLdefn} on the square lattice: $P^1$ acts in the $e_1$-direction and $P^2$ acts in the $e_2$-direction.}
\label{fig:grid}
\end{center}
\end{figure}

To turn these $H_{\Lam_L}$'s into a family of \textit{random translation-invariant Hamiltonians}, we only need to sample the prototypes for local interactions $P^1,\ldots,P^D$ from a suitable probability distribution.  For simplicity, we assume henceforth that the random projections $P^1,\ldots,P^D$ are independent and identically distributed and each $P^j$ is the projection onto the first $r$ columns of a Haar-random orthogonal matrix. The use of Haar measure is natural; indeed, it is equivalent to selecting the $r$ spanning vectors uniformly from the unit sphere, with different vectors chosen independently modulo the orthogonality constraint. Many other choices are possible and easily implementable, cf. Remark \ref{rmk:main} (ii) below.

\begin{rmk}[Possible extensions]
We close the presentation with a brief discussion of the assumptions on the underlying probability measure. Recall that we took the distribution of the random projections $P^1,\ldots,P^D$ to be i.i.d.\ with  each $P^j$ given as projection onto the first $r$ columns of a Haar-random orthogonal matrix. While independence is helpful, some correlation between different directions can be accommodated, but the proof does not apply to the isotropic case $P^1=\ldots=P^D$. There is a lot of freedom in choosing the underlying probability measure of each $P^j$, as long as the small-rank assumption is verified. More precisely, the choice of probability measure only enters in the proof of \cite[Lemma 7]{lemm2019gaplessness} which is used to obtain \eqref{eq:lemma7} below. This lemma was proved in \cite{lemm2019gaplessness} for Haar measure but the proof extends verbatim to any probability measure such that basis vectors are independent modulo orthonormality and the induced probability measure for an individual basis vector assigns positive probability to each spherical cap. For interactions of full rank, one can instead use a perturbative result by Yarotsky \cite{yarotsky2005uniqueness} to obtain a gap with positive probability.
\end{rmk}


\subsection{Previous results}

So far, the literature on the spectral gap problem for random translation-invariant Hamiltonians has been restricted to quantum spin chains, meaning that $D=1$, so the graph is a line $\Lam_L=\{1,2,\ldots,L\}$. We denote the 1D Hamiltonian by 
$$
H_L^{\mathrm{1D}}\equiv H_{\Lam_L}=\sum_{x=1}^{L} P_{x,x+1}
$$
where we chose periodic boundary conditions, identifying $L+1\equiv 1$. The following is known in 1D.

\be{itemize} 
\item For local dimension $d=2$ and rank $r=1$, Bravyi-Gosset \cite{bravyi2015gapped} proved that, with probability $1$, the quantum spin chain Hamiltonian $H_L^{\mathrm{1D}}$ is gapped.
 \item For arbitrary local dimension $d$ and rank $r\leq d-1$, \cite{lemm2019gaplessness} proved that the quantum spin chain Hamiltonian $H_L^{\mathrm{1D}}$ is gapped with positive probability.
\e{itemize}

We recall that we say that the family of $H_L^{\mathrm{1D}}$'s is \textit{gapped} (or $H_L^{\mathrm{1D}}$ is gapped for short) if there exists a universal constant $c>0$ such that for all $L$, the spectral gap above the ground state satisfies $\gam(H_L^{\mathrm{1D}})\geq c>0$. We remark that \cite{bravyi2015gapped}  analyze the $d=2$ and $r=1$ case in additional detail; they obtain a complete description of the spectral gap phase diagram in terms of the one-dimensional subspace in $\C^4$ that one projects on.

\subsection{Main result}
The main result of this work breaks the dimensional barrier and extends the previous results of \cite{bravyi2015gapped,lemm2019gaplessness} from one-dimensional line graphs to arbitrary graphs for the first time. Similarly to those work, we need to assume that the rank $r$ is sufficiently small compared to the local qudit dimension $d$.

\be{thm}[Main result]\label{thm:main}
Assume that
\beq\label{eq:mainass}
r\leq \min\l\{
\frac{\lfloor\frac{d^2}{4}\rfloor}{D},\frac{e^{-1}}{4D-1} d^2\r\}.
\eeq
Then, $H_{\Lam_L}$ is gapped with probability bounded below by $p>0$. \e{thm}

More precisely, the result says that there exists $c>0$ and a positive probability $p>0$ such that
\beq\label{eq:cp}
\mathbb P(\gam(H_{\Lam_L})\geq c)\geq p>0.
\eeq
Importantly, the constants $c>0$ and $p>0$ depend only on the order-$1$ system parameters $D,d$ and $r$ and not on the system size $L$. Similarly, it is an important feature of condition \eqref{eq:mainass} that it only depends only on these order-$1$ system parameters and again $L$ is not involved. (For example, taking rank $r=1$ and spatial dimension $D=2$, condition \eqref{eq:mainass} holds whenever the local qudit dimension $d\geq 4$.) The fact that these quantities are independent of $L$ is crucial for a meaningful derivation of a spectral gap in the thermodynamic limit $L\to\infty$.

 In words, our result can be summarized as the following answer to (a more precise version of) the question formulated in the introduction:\\

\textit{A random translation-invariant Hamiltonians with local interaction of small rank has a positive probability of being gapped.}\\

We recall that the small-rank assumption should be understood as ensuring frsutration-freeness. The precise statement is in Lemma \ref{lm:FF}.

A few remarks about Theorem \ref{thm:main} are in order.

\begin{rmk}\label{rmk:main}
\be{enumerate}[label=(\roman*)]
\item As in \cite{lemm2019gaplessness}, the proof is based on the verification of a finite-size criterion with positive probability. The argument is constructive: it provides explicit choices of local projections $P$ for which the Hamiltonian is gapped. Concretely, the interaction $P$ is obtained by taking the local projections to be onto the span of $r$ of the vectors
$$
w_{ij}=\vert i\otimes j\rangle,\qquad \textnormal{for $i\leq d$ odd and $j\leq d$ even}.
$$
and then perturbing these vectors on the unit sphere of $\mathbb C^d\otimes \mathbb C^d$ within a sufficiently small spherical cap. Further details can be found in Subsection \ref{ssect:proofmain}. The fact that any sufficiently small perturbation inside the spherical cap yields a gap is practically relevant, as it ensures that the implementation is stable against perturbations of the local interaction which may be present in experiments.
\item  It is possible to make the constants $c$ and $p$ in \eqref{eq:cp} explicit. 

\item The proof generalizes to other graphs, e.g., the honeycomb lattice, as explained in Section \ref{sect:gen}.

\e{enumerate}
\e{rmk}

Let us also compare our result to undecidability results on the spectral gap problem \cite{bausch2020undecidability,cubitt2015undecidability}. Specifically, the work \cite{cubitt2015undecidability} constructs a 2D Hamiltonian whose spectral gap problem is equivalent to the halting problem for Turing machines and therefore algorithmically and logically undecidable. The relevant Hamiltonians require very special choices of local interactions which form a zero-measure set for almost any choice of probability measure on interactions. In view of this, it is very likely that the spectral gap problem is generically decidable. Hence, the question whether the spectral gap problem is decidable is somewhat orthogonal to the question about the typical value of the gap that we consider here. 

Despite these a priori different perspective, we believe that the usefulness of small rank that we leverage here may shed further light on the undecidability question. The point is that the special examples constructed in \cite{cubitt2015undecidability} require large rank. Our work then raises the question whether the small rank assumption can be leveraged even further to address decidability of the gap for sufficiently small rank interactions. Note that this is what Bravy-Gosset obtained in 1D, where there is a similar undecidability result \cite{bausch2020undecidability}.

\subsection{Small rank implies frustration-freeness}
A key step in the proof is to show that Assumption \eqref{eq:mainass} on the parameters ensures that $H_{\Lam_L}$ is frustration-free (i.e., $\inf\mathrm{spec}\, H_{\Lam_L}=\{0\}$). This is non-trivial because it leverages information about the \textit{local} model parameters $r$ and $d$ to derive the \textit{global} property of frustration-freeness.

The result holds for general graphs of bounded degree and may be of independent interest. Given a graph $G=(K,E)$, let 
$$
H_G=\sum_{e\in E} P_e
$$
be a nearest-neighbor Hamiltonian with all local interactions $P_e$ of rank $r$. We still denote the local qudit dimension by $d$.

\be{lm}\label{lm:FF}
Let $G$ be a graph with degree bounded by $\de$. Assume that
\beq\label{eq:FFass}
r\leq \frac{e^{-1}}{2\de-1} d^2.
\eeq
Then $H_G$ is frustration-free.
\e{lm}

An intuitive explanation for this result is again that small interaction rank corresponds to large local ground state space, which leaves more “room” in the Hilbert space for the system to find a global ground state that is also a local ground state everywhere (i.e., that is frustration-free).


This important lemma rests on verifying the statistical mechanics style QSAT criterion from \cite{sattath2016local} (recalled as Theorem \ref{thm:QSAT} below) through bounds on the partition function of a classical hard-core dimer system that we derive from the cluster expansion method (Koteck{\'y}-Preiss criterion \cite{kotecky1986cluster}).


 \subsection{Bounds on the local gap and application to the area law}\label{ssect:ALapplication}
In fact, our method lower bounds the local gap. We recall that the local gap plays a key role in the recent breakthrough of Anshu, Arad, and Gosset \cite{anshu2022area} who proved that non-degenerate ground states of locally gapped frustration-free 2D Hamiltonians satisfy an area law. This is a 2D analog of a famous 1D result by Hastings \cite{hastings2007area}. Here we will use a variant that is adapted to highly degenerate ground spaces \cite{AL}; see also \cite{abrahamsen2020sharp}.
 
 Generalizing the definition of local gap  from \cite[Eq.\ (41)]{anshu2022area} to arbitrary graphs, we obtain the following.
 
 \be{defn}[Local gap]
 Let $G$ be a graph. The local gap is given by
 \beq
 \gam_{\mathrm{loc},G} =\min_{S\subset G} \gam(H_S).
 \eeq
\e{defn}

 The new area law criterion from \cite{anshu2022area} requires a lower bound on $ \gam_{\mathrm{loc},G} $ at least when $S$ varies over rectangle in $\Z^2$. Consequently, lower bounds of the local gap have become of significant interest. Note the trivial bound \beq\label{eq:localgapvsgap}\gam_{\mathrm{loc},G} \leq \gam_G,
\eeq so a lower bound on the local gap is more difficult to derive than a lower bound on the usual, global gap. Fortunately, our argument can be strengthened to give a lower bound on the local gap.

\be{thm}[Local gap version of main result]\label{thm:mainlocal}
Under the assumptions of Theorem \ref{thm:main}, there exist $c>0$ and a positive probability $p>0$ such that
$$
\mathbb P(\gam_{\mathrm{loc},\Lam_L}\geq c)\geq p>0.
$$
\e{thm}

The only difference compared to Theorem \ref{thm:main} is the replacement of $\gam_{\Lam_L}$ by $\gam_{\mathrm{loc},\Lam_L}$. Our technique to control the local gap will be discussed in Subsection \ref{sect:fslocal}.

\begin{rmk}
While it is generally expected that the gap decreases with system size, which would suggest a reverse inequality to \eqref{eq:localgapvsgap}, this heuristic can likely fail for certain subsystems due to boundary effects of finite systems. It is an interesting open problem to derive any kind of converse to \eqref{eq:localgapvsgap}, even restricting to the local gap over rectangles that enters in \cite{anshu2022area}.
\end{rmk}

Recalling the area-law criterion from \cite{anshu2022area} and considering Lemma \ref{lm:FF} and Theorem \ref{thm:mainlocal}, we see that only the non-degenerateness of the ground space can fail for our models. In fact, one can use Theorem \ref{thm:QSAT} to see that the ground space is exponentially degenerate in system size for certain parameter values. While methods exist for extending the result of \cite{anshu2022area} to degenerate ground spaces \cite{abrahamsen2020sharp}, they come with an additional contribution to the entanglement entropy of $\log \dim(\textnormal{ground space})$ which becomes volume-like, and thus uninformative, for exponential ground space degeneracy. 

Instead, we rely on a modified version of the area law that is obtainable by combining the methods from \cite{anshu2022area} and \cite{arad2017rigorous}. In terms of techniques, these works rely on the approximate ground state projector (AGSP) that is well-suited for frustration-free Hamiltonians.

We now state this novel variant criterion \cite{AL}. Let the graph $G=([-L,L]\cap \Z)^2$ be a box in $\Z^2$. We split the Hilbert space into left and right halves in the natural way
$$
\mathcal H=\mathcal H_{\mathrm{left}}\otimes \mathcal H_{\mathrm{right}}.
$$
Given a ground state $\psi$ of $H_G$, we let
$$
\rho_{\psi}^{\mathrm{left}}=\mathrm{Tr}_{\mathcal H_{\mathrm{right}}}\vert\psi\rangle\langle\psi\vert
$$
equal the partial trace over the right half. The entanglement entropy of the left and right halves is then given by 
\beq
S(\rho_{\psi}^{\mathrm{left}})=-\mathrm{Tr}\rho_{\psi}^{\mathrm{left}}\log\rho_{\psi}^{\mathrm{left}}.
\eeq
The following result holds deterministically for any Hamiltonian of the form \eqref{eq:HLdefn}.

\be{thm}[Area law criterion, \cite{anshu2022area,AL}]\label{thm:ALcrit}
Let $D=2$. Suppose that $H_{\Lam_L}$ is frustration-free and there exists $c>0$ independent of $L$ such that $\gam_{\mathrm{loc},G}\geq c>0$. 

Then, there exists a ground state $\psi$ of $H_G$ that satisfies an area law for the entanglement entropy, i.e.,
\beq\label{eq:AL}
S(\rho_{\psi}^{\mathrm{left}})\leq CL^{1+(\log L)^{-1/5}}
\eeq
\e{thm}

As pointed out to us by \cite{AL}, this follows directly from the validity of the AGSP condition, cf.\ \cite{anshu2022area,arad2017rigorous}.

\begin{rmk}
The logarithmic correction in \eqref{eq:AL} does not play any important role. The result can be generalized if desired, e.g., to any cut and to rectangular (i.e., non-square) domains. Moreover, only a lower bound on the local gap over rectangles and L-shaped regions is needed as in \cite{anshu2022area}. 
\e{rmk}

Our main results then imply that the random translation-invariant Hamiltonians are among the small number of non-trivial examples of 2D Hamiltonians that provably have ground states satisfying an area law. 

\be{cor}[Area law]
Let $D=2$ and assume that \eqref{eq:mainass} holds.  

Then, with a positive probability $p>0$ that is independent of $L$, $H_{\Lam_L}$ has a ground state satisfying the area law \eqref{eq:AL}. 
\e{cor}

In closing, we remark that this result does not rely on PEPS techniques and, indeed, it does not identify the ground state satisfying the area law as a PEPS state.

\subsection{Finite-size criteria automatically control local gaps}\label{sect:fslocal}
In this short section, we would like to emphasize a methodological point. The local gap bound from Theorem \ref{thm:mainlocal} rests on the observation that the standard Knabe finite-size criterion for the gap \cite{knabe1988energy} can be easily extended to also lower bound the local gap, which as reviewed above has recently emerged as an important quantity in quantum information theory. This shows that there is an efficient method available for lower bounding the local gap---finite-size criteria. 

Let  $j,k=1,\ldots,D$. We introduce the local $3$-site Hamiltonians
\beq\label{eq:H3list}
\begin{aligned}
H_{j}=&P^j_{-e_j,0}+P^j_{0,e_j},\\
H^{++}_{jk}=&P^j_{0,e_j}+P^k_{0,e_k},\\
H^{-+}_{jk}=&P^j_{-e_j,0}+P^k_{0,e_k},\\
H^{--}_{jk}=&P^j_{-e_j,0}+P^k_{-e_k,0}.
\end{aligned}
\eeq
These local Hamiltonians cover all the possibilities by which adjacent edges can interact. We denote their common minimal spectral gap by
\beq\label{eq:gam3defn}
\gam_3=\min_{j,k=1,\ldots,D}\min\left\{\gam(H_{j}),\gam(H^{++}_{jk}),\gam(H^{-+}_{jk}),\gam(H^{--}_{jk})\r\}.
\eeq

\be{prop}[Finite-size criterion for the local gap in any dimension]\label{prop:knabelocal}
Let $H_{\Lam_L}$ be frustration-free. Then
\beq\label{eq:knabelocal}
\gam_{\mathrm{loc},\Lam_L}\geq (4D-2)\l(\gam_3-\frac{4D-3}{4D-2}\r) 
\eeq
\e{prop}

The proof is a generalization of the one from \cite{knabe1988energy}. It is included in the appendix for the convenience of the reader.  The number $\frac{4D-3}{4D-2}$ is commonly called the gap threshold. We refer to \cite{anshu2020improved,gosset2016local,lemm2020finite,lemm2019spectral,lemm2022quantitatively} for other improvements and generalizations of Knabe's finite-size criterion, especially for hypercubic lattices.

 \subsection{The one-dimensional case}
 Upon revisiting the construction in the one-dimensional spin chain setting \cite{lemm2019gaplessness}, we observe here that by tweaking the construction in \cite{lemm2019gaplessness}, the range of ranks can be improved up to a natural limit. The 1D random Hamiltonian is of the simple form
 $$
 H_L=\sum_{x=1}^{L-1} P_{x,x+1},
 $$
 where we chose open boundary conditions for definiteness. (The same proof applies verbatim to periodic boundary conditions.)
 
 \be{thm}[Improved 1D result]\label{thm:improvementL}
Let $r\leq \max\{\tfrac{d^2}{4},d-1\}$. Then $H_L$ is gapped with positive probability. 
 \e{thm}
 
 This strictly extends the main result of \cite{lemm2019gaplessness} whenever $\tfrac{d^2}{4}\geq d$, i.e., whenever $d\geq 4$. The assumption  $r\leq \max\{\tfrac{d^2}{4},d-1\}$ is precisely what guarantees frustration-freeness \cite{movassagh2010unfrustrated} and so the range of ranks covered by Theorem \ref{thm:improvementL} is a natural limit point of existing methods, as these all require frustration-freeness. 
 
 The restriction to $r\leq d-1$ in the previous work \cite{lemm2019gaplessness} arose due to a technical issue that we resolve here. See Appendix \ref{app:improvementL} for the details.

\section{Frustration-freeness}
In this section, we prove Lemma \ref{lm:FF} using the QSAT criterion Theorem \ref{thm:QSAT} from \cite{sattath2016local}. These results hold on general graphs.

The criterion concerns the classical partition function of the nearest-neighbor exlusion on the interaction graph associated to $H_G$. We take some care to formulate the criterion directly in a way that is suitable for our purposes:
\be{itemize}
\item We define the partition function with the notation of \cite{kotecky1986cluster}. This is convenient for implementing  the cluster expansion method to bound the partition function afterwards.
\item We focus on edge graphs which are the interaction graphs of nearest-neighbor interactions such as our $P_{x,x+e_j}$.
\e{itemize}

\be{defn}[Partition function of nearest-neighbor exclusion]\label{defn:Z}
Consider a graph $\mathcal G=(\mathcal K,\mathcal E)$. 
Given two vertices $\gamma_1,\gamma_2\in \mathcal K$, we write $\gamma_1\iota\gamma_2$ (read: $\gamma_1$ is joined to $\gamma_2$) or equivalently $(\gamma_1,\gamma_2)\in\iota$ if $\gamma_1$ and $\gamma_2$ are either equal or connected by an edge.
Let $\mathcal D_0(\mathcal G)$ be the set of \textit{admissible} subsets $\partial\subset \mathcal K$, i.e., for every $\gamma_1,\gamma_2\in\partial$, $(\gamma_1,\gamma_2)\not\in\iota$.
Let $\mathcal B(G)$ denote the set of subsets $\partial\subset K$ that are {\it connected}, in the sense that any partition $S_1,S_2\subset\partial$ with $S_1\cup S_2=\partial$ is such that there exists $\gamma_1\in S_1$ and $\gamma_2\in S_2$ such that $\gamma_1\iota\gamma_2$.
The grand-canonical partition function of the nearest-neighbor exclusion on $\mathcal G$ at fugacity $z\in\mathbb C$ is
\begin{equation}
  \mathcal Z(\mathcal G;z)=\sum_{\partial\in\mathcal D_0(\mathcal G)}z^{|\partial|},
\end{equation}
where $|\partial|$ is the cardinality of $\partial$.
\e{defn}

We will consider the nearest-neighbor exclusion not on the original graph $G$, but on its edge graph $G'$ defined as follows.

\be{defn}[Edge graph]
Given a graph $G=(K,E)$ we define the edge graph $G'$. The set of vertices of $G'$ is the set $E$ of edges of $G$. For every pair $e,e'\in E$, $G'$ has an edge between $e$ and $e'$ if and only if $e$ and $e'$ share a vertex in $G$ (that is, $e\cap e'\neq\emptyset$).
\e{defn}


The QSAT criterion reads as follows.

\be{thm}[QSAT criterion \cite{sattath2016local}]
\label{thm:QSAT}
Let $G$ be a graph. Let $p=\frac{r}{d^2}$ and suppose that 
\beq\label{eq:Zpos}
\mathcal Z(G';-p')>0,\qquad p'\in [0,p].
\eeq
Then 
\beq\label{eq:dimkerlb}
\dim\ker H_G\geq \mathcal Z(G';-p) d^{|G|}>0.
\eeq
\e{thm}

This can be considered a quantum analog of Shearer's theorem. We recall that the lower bound $\dim\ker H_G>0$ implies that $H_G$ is frustration-free. 


The main work of this section is then to verify condition \eqref{eq:Zpos} of Theorem \ref{thm:QSAT} in order to derive Lemma \ref{lm:FF}. We achieve this via the cluster expansion method, specifically the Koteck{\'y}-Preiss criterion.

\subsection{Cluster expansion bound}
\indent
Before we state the Koteck\'y-Preiss criterion, a bibliographical comment is in order.
The cluster expansion is used in this paper for a nearest-neighbor exclusion model on a graph.
In this simple setting, the cluster expansion should really be called a Mayer expansion, whose introduction dates back to \cite{mayer1937statistical,ursell1927evaluation}.
We will need an estimate for the error of the Mayer expansion, and such estimates can be found in the literature as far back as the early 1960's \cite{groeneveld1962two,ruelle1999statistical}.
Here, we have opted for the more general Koteck\'y-Preiss formulation \cite{kotecky1986cluster}, as it is directly applicable to discrete systems (graphs), whereas many of the older references are stated in the continuum (despite the fact that their proofs are easily generalizable to graphs).\\

We consider a graph $\mathcal G=(\mathcal K,\mathcal E)$ of bounded degree. We denote the maximal degree of $\mathcal G$ by $\Delta-1$. We let $\mathcal Z(\mathcal G;z)$ be as in Definition \ref{defn:Z}.

\begin{thm}[Koteck{\'y}-Preiss criterion \cite{kotecky1986cluster}]\label{theo:kotecky1986cluster}
  If there exists $a\geqslant 0$ such that, for all $\gamma\in \mathcal K$,
  \begin{equation}
    \sum_{\gamma':\ \gamma'\iota\gamma}e^{a}|z|\leqslant a
    \label{kotecky1986cluster_condition}
  \end{equation}
  then $\mathcal Z(\mathcal G;z)\neq0$, and there exists a function $\Phi^T:\mathcal B(\mathcal G)\to\mathbb C$ such that
  \begin{equation}
    \log\mathcal Z(\mathcal G;z)=\sum_{C\in\mathcal B(\mathcal G)}\Phi^T(C)
    \label{kotecky1986cluster_logZ}
  \end{equation}
  furthermore,
  \begin{equation}
    \sum_{C\in\mathcal B(\mathcal G):\ C\iota\gamma}|\Phi^T(C)|\leqslant a
    \label{kotecky1986cluster_bound}
  \end{equation}
  where $C\iota\gamma$ means that $\gamma'\iota\gamma$ for all $\gamma'\in C$.
\end{thm}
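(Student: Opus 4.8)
The plan is to run the classical Mayer (cluster) expansion for the hard-core lattice gas on $\mathcal G$, which here is the simplest instance of the method since the only interaction is nearest-neighbour (including self-) exclusion. Assume first that $\mathcal G$ is finite; otherwise $\mathcal Z(\mathcal G;z)$ is not even a finite quantity, and one applies the result to finite subgraphs, the bound \eqref{kotecky1986cluster_bound} being uniform over graphs of degree at most $\Delta-1$. Regard the vertices $\gamma\in\mathcal K$ as polymers, each of activity $z$, and declare $\gamma_1,\gamma_2$ incompatible exactly when $\gamma_1\iota\gamma_2$; then $\mathcal Z(\mathcal G;z)=\sum_{\partial\in\mathcal D_0(\mathcal G)}z^{|\partial|}$ is the associated abstract polymer partition function. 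For a connected set $C\in\mathcal B(\mathcal G)$ define the truncated weight $\Phi^T(C)$ by grouping the Mayer expansion according to support: $\Phi^T(C)$ is the sum, over all tuples $(\gamma_1,\dots,\gamma_n)$ whose underlying set is $C$ and whose incompatibility graph on $\{1,\dots,n\}$ is connected, of $\tfrac{z^n}{n!}\,\varphi^T(\gamma_1,\dots,\gamma_n)$, where $\varphi^T(\gamma_1,\dots,\gamma_n)=\sum_{g\in\mathcal C_n}\prod_{\{i,j\}\in E(g)}\bigl(-\mathbbm{1}[\gamma_i\iota\gamma_j]\bigr)$ is the Ursell function and $\mathcal C_n$ the set of connected graphs on $\{1,\dots,n\}$; see \cite{kotecky1986cluster,ruelle1999statistical}. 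With this definition the identity \eqref{kotecky1986cluster_logZ}, equivalently $\mathcal Z(\mathcal G;z)=\exp\bigl(\sum_{C\in\mathcal B(\mathcal G)}\Phi^T(C)\bigr)$, is the classical exponential formula for the Mayer expansion, valid order by order in $z$ and independently of \eqref{kotecky1986cluster_condition}.

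The substance of the theorem is the convergence bound \eqref{kotecky1986cluster_bound}; granting it --- and its proof also exhibits each $\Phi^T(C)$ as a finite number --- the sum $\sum_{C\in\mathcal B(\mathcal G)}\Phi^T(C)$ converges absolutely to some $S\in\mathbb C$, so the formal identity upgrades to $\mathcal Z(\mathcal G;z)=e^{S}\neq 0$ and \eqref{kotecky1986cluster_logZ} holds numerically. To prove \eqref{kotecky1986cluster_bound} I would first invoke Penrose's tree-graph inequality, which is especially transparent here because every Mayer factor equals $-\mathbbm{1}[\gamma_i\iota\gamma_j]\in\{-1,0\}$, so that the alternating cancellations in $\varphi^T$ are absorbed by a sign-definite partition of $\mathcal C_n$ into classes indexed by spanning trees, giving $|\varphi^T(\gamma_1,\dots,\gamma_n)|\le\sum_{T\in\mathcal T_n}\prod_{\{i,j\}\in E(T)}\mathbbm{1}[\gamma_i\iota\gamma_j]$, with $\mathcal T_n$ the trees on $\{1,\dots,n\}$. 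This bounds the left-hand side of \eqref{kotecky1986cluster_bound} by a weighted sum over finite labelled trees attached to $\gamma$, carrying a factor $|z|$ per vertex and $\mathbbm{1}[\cdot\iota\cdot]$ per edge. One then estimates this rooted-tree sum by induction on the number of vertices --- equivalently, by a fixed-point argument for its generating function $\mu(\gamma)$: peeling off the root expresses $\mu(\gamma)$ recursively through $|z|$ and the values $\mu(\gamma')$ over $\gamma'\iota\gamma$ (schematically of the form $\mu(\gamma)\le|z|\sum_{\gamma'\iota\gamma}e^{\mu(\gamma')}$), and hypothesis \eqref{kotecky1986cluster_condition}, namely $\sum_{\gamma'\iota\gamma}e^{a}|z|\le a$, is exactly what makes the constant $a$ a valid bound propagated by this recursion, forcing $\mu(\gamma)\le a$ and hence \eqref{kotecky1986cluster_bound}.

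The main obstacle is this last inductive estimate. One has to carry along the symmetry factors $1/n!$, the labelled-tree combinatorics, and --- within the Penrose scheme --- the bookkeeping of the vertices each fresh branch must avoid (those already incompatible with an ancestor, so that deeper nodes have strictly fewer admissible directions), so that the recursion closes \emph{exactly} under \eqref{kotecky1986cluster_condition} rather than under some wasteful weaker inequality; this is the technical heart of \cite{kotecky1986cluster} and its later refinements. Two minor points also need attention: promoting \eqref{kotecky1986cluster_logZ} from a formal identity to a numerical one (routine once absolute convergence is available), and, in the infinite-graph case relevant to Lemma \ref{lm:FF}, observing that although \eqref{kotecky1986cluster_logZ} is a finite-volume statement, the bound \eqref{kotecky1986cluster_bound} --- a sum over the finitely many connected $C$ all of whose vertices are $\iota$ to $\gamma$ --- is uniform in the finite graph, which is what the applications use. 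Alternatively one may bypass the tree-graph inequality altogether and run Koteck\'y--Preiss's original induction on $|C|$ directly, decomposing a cluster all of whose vertices touch $\gamma$ into $\gamma$ together with the connected sub-clusters hanging off it; the resulting estimates coincide.
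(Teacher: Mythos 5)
The paper does not prove Theorem~\ref{theo:kotecky1986cluster}; it is stated as a citation to \cite{kotecky1986cluster} and used as a black box, so there is no in-paper proof to compare against. Your sketch is a sound organization of the standard argument: treating vertices as polymers with incompatibility $\iota$, defining $\Phi^T$ through Ursell functions so that \eqref{kotecky1986cluster_logZ} is the exponential formula for the Mayer expansion, invoking Penrose's tree-graph inequality to bound $|\Phi^T|$ (particularly transparent here since every Mayer factor lies in $\{-1,0\}$), and then closing \eqref{kotecky1986cluster_bound} by an inductive or fixed-point estimate driven by \eqref{kotecky1986cluster_condition}. You also correctly isolate where the real work is and explicitly defer it.

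One caveat worth naming, since it is the crux: the schematic recursion you display,
\begin{equation*}
\mu(\gamma)\le|z|\sum_{\gamma':\,\gamma'\iota\gamma}e^{\mu(\gamma')},
\end{equation*}
is not what peeling the root actually produces. A rooted tree has a \emph{set} of descendants at the root, so the natural recursion is multiplicative, of the shape $\mu(\gamma)\le|z|\prod_{\gamma'\iota\gamma}\bigl(1+\mu(\gamma')\bigr)\le|z|\exp\bigl(\sum_{\gamma'\iota\gamma}\mu(\gamma')\bigr)$, and feeding the naive ansatz $\mu(\gamma')\le a$ into it requires $|z|\,e^{\Delta a}\le a$, which is strictly stronger than the hypothesis $\Delta|z|\,e^{a}\le a$. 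So the bound does \emph{not} close in the naive way, and the genuine Koteck\'y--Preiss bookkeeping (or, in the tree-graph route, the Penrose-tree pruning of which branches are allowed at each depth) is indispensable. You flag exactly this as ``the technical heart'' and hand it to \cite{kotecky1986cluster}, which is an honest and accurate attribution; but it does mean your proposal is a correct roadmap and citation rather than a self-contained proof of \eqref{kotecky1986cluster_bound}. For the purposes of this paper, which itself treats the theorem as an external ingredient, that level of detail is entirely appropriate.
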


The following conclusion of this theorem is tailored to our purposes.

\begin{cor}\label{cor:CEbound}
For $0\leqslant p'\leqslant\frac{e^{-1}}\Delta$, we have
  \begin{equation}\label{eq:CEbound}
    \mathcal Z(\mathcal G;-p')\geqslant e^{-|K|e^1\Delta p'}.
  \end{equation}
\end{cor}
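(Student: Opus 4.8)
The plan is to apply the Koteck\'y--Preiss criterion (Theorem \ref{theo:kotecky1986cluster}) to the graph $\mathcal G$ at fugacity $z=-p'$ with the $p'$-dependent choice of control parameter $a=e^{1}\Delta p'$. This choice is exactly what produces the stated estimate: a fixed constant such as $a=1$ also verifies the hypothesis when $p'\le e^{-1}/\Delta$, but it would only give the weaker bound $\mathcal Z(\mathcal G;-p')\ge e^{-|\mathcal K|}$, whereas $a=e\Delta p'$ yields the sharper, $p'$-linear exponent. We may assume $p'>0$, the case $p'=0$ being trivial since $\mathcal Z(\mathcal G;0)=1$.

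First I would verify hypothesis \eqref{kotecky1986cluster_condition}. Since $\mathcal G$ has maximal degree $\Delta-1$, for every $\gamma\in\mathcal K$ the set $\{\gamma':\gamma'\iota\gamma\}$, which consists of $\gamma$ itself together with its at most $\Delta-1$ neighbors, has at most $\Delta$ elements, so
\[
  \sum_{\gamma':\ \gamma'\iota\gamma} e^{a}|z|\ \le\ \Delta\, e^{a} p'\ =\ \Delta p'\, e^{e\Delta p'}\ \le\ \Delta p'\cdot e^{1}\ =\ e\Delta p'\ =\ a,
\]
the middle inequality being precisely the hypothesis $e\Delta p'\le 1$, i.e.\ $p'\le e^{-1}/\Delta$. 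Hence Theorem \ref{theo:kotecky1986cluster} applies: $\mathcal Z(\mathcal G;-p')\ne 0$, the cluster expansion \eqref{kotecky1986cluster_logZ} holds, and \eqref{kotecky1986cluster_bound} gives $\sum_{C:\ C\iota\gamma}|\Phi^T(C)|\le a=e\Delta p'$ for each $\gamma\in\mathcal K$.

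It remains to turn this per-vertex bound into a bound on $\log\mathcal Z(\mathcal G;-p')=\sum_{C\in\mathcal B(\mathcal G)}\Phi^T(C)$. Each cluster $C\in\mathcal B(\mathcal G)$ is nonempty, and a cluster containing a vertex $\gamma$ is counted on the left-hand side of \eqref{kotecky1986cluster_bound} attached to $\gamma$ (as $\gamma\iota\gamma$); therefore summing \eqref{kotecky1986cluster_bound} over all $\gamma\in\mathcal K$ dominates $\sum_{C}|\Phi^T(C)|$, with only harmless over-counting (by a factor $|C|\ge 1$ per cluster):
\[
  \bigl|\log\mathcal Z(\mathcal G;-p')\bigr|\ \le\ \sum_{C\in\mathcal B(\mathcal G)}|\Phi^T(C)|\ \le\ \sum_{\gamma\in\mathcal K}\ \sum_{C:\ C\iota\gamma}|\Phi^T(C)|\ \le\ |\mathcal K|\,e\Delta p'.
\]
Consequently $\log\mathcal Z(\mathcal G;-p')\ge-|\mathcal K|\,e\Delta p'$, which is \eqref{eq:CEbound} (with $|\mathcal K|=|K|$), and as a byproduct $\mathcal Z(\mathcal G;-p')>0$, which is the positivity \eqref{eq:Zpos} needed later to invoke Theorem \ref{thm:QSAT}. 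I do not anticipate a genuine obstacle: the only real decision is the choice $a=e\Delta p'$, after which everything is bookkeeping, the single point demanding care being that the closed neighborhood $\{\gamma':\gamma'\iota\gamma\}$ has size at most $\Delta$ (not $\Delta-1$), which is what matches the factor $\Delta$ appearing in \eqref{eq:CEbound}.
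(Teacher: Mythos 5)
Your proof is correct and follows essentially the same route as the paper: verify the Koteck\'y--Preiss hypothesis with an appropriate control parameter $a$, then sum the per-vertex cluster bound \eqref{kotecky1986cluster_bound} over all $\gamma\in\mathcal K$ to dominate $|\log\mathcal Z|$ by $|\mathcal K|\,a$. The only (cosmetic) difference is that the paper chooses $a$ implicitly as the small solution of $ae^{-a}=\Delta p'$ and afterwards estimates $a\le e\Delta p'$ from $a\le 1$, whereas you set $a=e\Delta p'$ outright and check the hypothesis directly; both give the identical bound.
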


\begin{proof}[Proof of Corollary \ref{cor:CEbound}]
We introduce the specific free energy 
\begin{equation}
  f_{\mathcal G}(z)=\frac1{|\mathcal K|}\log\mathcal Z(\mathcal G;z),
  \label{freeen}
\end{equation}
with $|\mathcal K|$ the cardinality of $\mathcal K$. We claim that if
  \begin{equation}
    |z|\leqslant\frac{e^{-1}}\Delta
    \label{cdz}
  \end{equation}
then
\beq\label{eq:CEfbound}
    |f_{\mathcal G}(z)|
    \leqslant e^1\Delta|z|
  \eeq
 
  Suppose that \eqref{cdz} holds. We aim to apply Theorem\-~\ref{theo:kotecky1986cluster}.
  The condition\-~(\ref{kotecky1986cluster_condition}) is
  \begin{equation}
    |z|\leqslant\frac{ae^{-a}}\Delta
    .
  \end{equation}
  In order to choose $a$ as small as possible, we choose it to saturate the inequality
  \begin{equation}
    ae^{-a}=\Delta|z|
    .
    \label{a}
  \end{equation}
Indeed, since $a\mapsto ae^{-a}$ is an increasing function for $a\leqslant1$, condition \eqref{a} has a solution if and only if\-~(\ref{cdz}) holds.\\

  Furthermore, \eqref{kotecky1986cluster_logZ} implies that
  \begin{equation}
    |\log\mathcal Z(\mathcal G;z)|
    \leqslant\sum_{\gamma\in \mathcal K}
    \sum_{C\in\mathcal B(\mathcal G):\ C\iota\gamma}|\Phi^T(C)|
  \end{equation}
  and so, by\-~(\ref{kotecky1986cluster_bound}),
  \begin{equation}
    |f_{\mathcal G}(z)|
    \leqslant a
    \leqslant e^1\Delta|z|
    .
  \end{equation}
This proves \eqref{eq:CEfbound} and hence \eqref{eq:CEbound}.
\end{proof}

\subsection{Proof of frustration-freeness}

\be{proof}[Proof of Lemma \ref{lm:FF}]
We aim to apply Theorem \ref{thm:QSAT} and are thus led to consider $\mathcal Z(G',p')$. Elementary considerations show that the edge graph satisfies the maximal degree bound
$$
\De-1\leq 2(\de-1).
$$
Assumption \eqref{eq:mainass} ensures that
$$
p=\frac{r}{d^2}\leq \frac{e^{-1}}{2\de-1}\leq \frac{e^{-1}}{\De}.
$$
Hence, by Corollary \ref{cor:CEbound} with $\mathcal G=G'$, we have for all $p'\in [0,p]$ the estimate
$$
\mathcal Z(G';-p)\geq e^{-|E| e\De p}>0,
$$
where we used that the vertex set of the edge graph $G'$ is equal to the edge set $E$ of the original graph $G$. 

Hence, we can apply Theorem \ref{thm:QSAT} to conclude that 
$$
\dim\ker H_G\geq \mathcal Z(G';-p) d^{|G|}>0
$$
and so $H_G$ is frustration-free.
\e{proof}

\section{Bounds on the spectral gap}
\subsection{Proof of finite-size criterion for the local gap}
\begin{proof}[Proof of Proposition \ref{prop:knabelocal}]
Let $S\subset \Lam_L$ be an arbitrary subgraph. Without loss of generality, $S$ is connected. (Indeed, if $S$ is the disjoint union of $S_1$ and $S_2$, then $\gam(H_S)=\min\{\gam(H_{S_1}),\gam(H_{S_2})\}$.)  The following is an extension of Knabe's argument \cite{knabe1988energy} to any dimension without assuming isotropy. The underlying observation is that by restricting on the right-hand side of Proposition \eqref{eq:knabelocal} to $3$-vertex subgraphs, we can control the gap of any subgraph $S\subset \Lam_L$. 

By the spectral theorem and frustration-freeness, the claim is equivalent to the operator inequality
\beq\label{eq:HGsquaredrephrase}
H_S^2\geq (4D-2)\l(\gam_3-\frac{4D-3}{4D-2}\r) H_S.
\eeq
Since the $P^j_{x,x+e_j}$ are projections, we have
$$
H_S^2=H_S+Q_S+R_S
$$
where $Q_S$ contains the terms of the form $\{P^j_{x,x+e_j},P^k_{y,y+e_k}\}$ so that the edges $\{x,x+e_j\}$ and $\{y,y+e_j\}$ share exactly one vertex and $R_S $ contains the terms $\{P_{x,x+e_j},P_{y,y+e_k}\}$ so that the edges $\{x,x+e_j\}$ and $\{y,y+e_j\}$  do not share any vertex. Note that $R_S$ is  positive definite by commutativity. Hence,
\beq\label{eq:HGsquaredbound}
H_S^2\geq H_S+Q_S.
\eeq

We define the auxiliary operator 
$$
\begin{aligned}
\mathcal A_S=&\sum_{x\in S} \sum_{j=1}^D (H_{j,x})^2 \\
&+\sum_{x\in S} \l(\sum_{j,k=1}^D \l(\l(H^{++}_{jk,x}\r)^2
+\l(H^{-+}_{jk,x}\r)^2+\l(H^{--}_{jk,x}\r)^2\r)\r)
\end{aligned}
$$
where the subindex $x$ denotes a shift by $x$ compared to \eqref{eq:H3list}, e.g., $H^{-+}_{jk,x} =P^j_{x-e_j,x}+P^k_{x,x+e_k}$, etc.

Let $\de_x\leq 2D$ denote the degree of vertex $x\in S$ within the subgraph induced by $S$.
 On the one hand, we can expand the square and interchange the order of summation to find
$$
\mathcal A_S
=\sum_{x\in S}\sum_{j=1}^D (\de_x+\de_{x+e_j}-2) P^j_{x,x+e_j}
+Q_S.
$$
On the other hand, by translation-invariance, frustration-freeness, and the spectral theorem, we have the operator inequality
$$
\begin{aligned}
\mathcal A_S
\geq&\gam_3 \sum_{x\in S} \l(\sum_{j=1}^D H_{j,x} +\sum_{j,k=1}^D \l(H^{++}_{jk,x}
+H^{-+}_{jk,x}+H^{--}_{jk,x}\r)\r)\\
=&\gam_3\sum_{x\in S}\sum_{j=1}^D (\de_x+\de_{x+e_j}-2) P^j_{x,x+e_j}
\end{aligned}
$$
Combining these, we obtain
$$
\begin{aligned}
Q_S\geq &\l(\gam_3-1\r)  \sum_{x\in S}\sum_{j=1}^D (\de_x+\de_{x+e_j}-2) P^j_{x,x+e_j}\\
\geq& (4D-2)\l(\gam_3-1\r)  H_S,
\end{aligned}
$$
where the second step uses $\gam_3\leq 1$ and $1\leq \de_x\leq 2D$.
We now use \eqref{eq:HGsquaredbound} to conclude \eqref{eq:HGsquaredrephrase}.
\e{proof}

\subsection{Proof of the main result}\label{ssect:proofmain}
In this section, we prove the main result, Theorem \ref{thm:main}. Some details are parallel to \cite{lemm2019gaplessness} and will be skipped. An important difference here compared to \cite{lemm2019gaplessness} is the choice of ``good'' vectors $w_{ij}$ in \eqref{eq:wchoice}; see also Appendix A.

\be{proof}[Proof of Theorem \ref{thm:main}] By applying Lemma \ref{lm:FF} with $G=\Lam_L$, so $\de=2D$ and using \eqref{eq:mainass},we obtain that $H_{\Lam_L}$ is frustration-free. Hence we can apply Proposition \ref{prop:knabelocal} by which it suffices to show
\beq\label{eq:gam3claim}
\gam_3>\frac{4D-3}{4D-2}.
\eeq

We can lower bound $\gam_3$ by using the anticommutator bound from Lemma 6.3(ii) in \cite{fannes1992finitely}. 
Recall \eqref{eq:H3list} and introduce the set of pairs of projections appearing in it, i.e.,
$$
\begin{aligned}
\mathcal P=&\left\{
\{P^j_{-e_j,0},P^j_{0,e_j}\},
\{P^j_{0,e_j},P^k_{0,e_k}\},\r.\\
&\;\l.\{P^j_{-e_j,0},P^k_{0,e_k}\},
\{P^j_{-e_j,0},P^k_{-e_k,0}\}\right\}_{j,k=1,\ldots,D}
\end{aligned}
$$
Inspecting the proof of Theorem 5 in \cite{lemm2019gaplessness}, we obtain the gap bound
$$
\gam_3\geq 1- \max_{\{P,Q\}\in \mathcal P}\|P Q-P\wedge Q\|
$$
where $P\wedge Q$ is the orthogonal projection onto the subspace $\mathrm{ran}\, P\cap \mathrm{ran}\, Q$.

Hence, the claim \eqref{eq:gam3claim} reduces to
\beq\label{eq:gam3claim'}
\max_{\{P,Q\}\in \mathcal P}\|PQ-P\wedge Q\|
<\frac{1}{4D-2}.
\eeq

To prove \eqref{eq:gam3claim'}, we proceed via an auxiliary family of ``good'' projectors $\tilde P^1,\ldots,\tilde P^D$. (This generalizes an idea of \cite{lemm2019gaplessness}, though, the choice of good projectors is quite different here.) First, we set
\beq\label{eq:wchoice}
w_{ij}=\vert i\otimes j\rangle,\qquad \textnormal{for $i\leq d$ odd and $j\leq d$ even}.
\eeq
Note that the total number of these vectors is $\frac{d^2}{4}$ if $d$ is even and $\frac{d^2-1}{4}$ if $d$ is odd. Equivalently, the total number of these vectors is $\tilde d:=\lfloor \frac{d^2}{4}\rfloor$. Let 
$
\{v_1,\ldots,v_{\tilde d}\}
$
be an enumeration of the set $\{w_{i,j}\}$. Now we define the rank-$r$ good projectors
\beq\label{eq:tildePdefn}
\tilde P^j=\sum_{i=(j-1)r+1}^{jr} P_{v_i},\qquad j=1,\ldots,D,
\eeq
where $P_{v_i}$ projects onto $v_i$. Note that defining $\tilde P^D$ uses our assumption in \eqref{eq:mainass} that $Dr\leq \tilde d=\lfloor \frac{d^2}{4}\rfloor$.

It is then elementary to check from \eqref{eq:wchoice} the key property that
\beq\label{eq:tildePvanish}
\tilde P\tilde Q=\tilde P\wedge \tilde Q=0
\eeq
for all pairs $\{\tilde P,\tilde Q\}$ belonging to the set
$$
\begin{aligned}
\tilde{\mathcal{P}}=&\left\{
\{\tilde P^j_{-e_j,0},\tilde P^j_{0,e_j}\},
\{\tilde P^j_{0,e_j},\tilde P^k_{0,e_k}\},\r.\\
&\;\l.\{\tilde P^j_{-e_j,0},\tilde P^k_{0,e_k}\},
\{\tilde P^j_{-e_j,0},\tilde P^k_{-e_k,0}\}\right\}_{j,k=1,\ldots,D}
\end{aligned}
$$
Indeed, this is obvious for $j\neq k$ because the ranges of $\tilde P$ and $\tilde Q$ are orthogonal, while for $\{\tilde P,\tilde Q\}=\{\tilde P^j_{-e_j,0},\tilde P^j_{0,e_j}\}$ this relies on the parity structure in \eqref{eq:wchoice} and the fact that even indices are paired with odd ones at the center site $0$.

Now we proceed with a perturbative argument. Specifically, we apply Lemma 7 in \cite{lemm2019gaplessness}, which follows from some geometrical estimates for spherical caps, to the particular collection of vectors $\{v_1,\ldots,v_{\tilde d}\}$. This yields, for any $\eps>0$, vectors $\{\phi_1,\ldots,\phi_{\tilde d}\}$ such that
\beq\label{eq:lemma7}
\mathbb P\l(\sup_{1\leq i\leq \tilde d} \|v_i-\phi_i\|<\eps \r)\geq p_\eps>0.
\eeq
From now on, we assume that, for some $\eps>0$ to be determined, we have taken $\{\phi_1,\ldots,\phi_{\tilde d}\}$ such that this event occurs, i.e., $\|v_i-\phi_i\|<\eps$ holds for all $1\leq i\leq \tilde d$.  

We recall a classical theorem of von Neumann \cite{neumann1950functional} which says that for any pair of projections $P,Q$,
$$
P\wedge Q=s-\lim_{n\to\infty} (PQ)^n,
$$
and consequently $\|P\wedge Q\|\leq \liminf_{n\to\infty} \|PQ\|^n$, leading us to
\beq\label{eq:PQ}
\|PQ-P\wedge Q\|
\leq \|PQ\| 
+ \liminf_{n\to\infty} \|PQ\|^n.
\eeq
Given $\{P,Q\}\in\mathcal P$, we let $\{\tilde P,\tilde Q\}$ be the corresponding pair in $\mathcal P$. Then we estimate $\|PQ\|= \|PQ-\tilde P\tilde Q\| $ by using $\|v_i-\phi_i\|<\eps$ and following the same steps as in \cite{lemm2019gaplessness} to conclude that
\beq
\|PQ\|\leq 4r\eps.
\eeq
For $\eps<\frac{1}{4r}$, we obtain $\liminf_{n\to\infty} \|PQ\|^n=0$. Finally, we choose $\eps>0$ small enough such that $4r\eps<\frac{1}{4D-2}$. Then \eqref{eq:PQ} implies \eqref{eq:gam3claim'} and hence Theorem \ref{thm:main}.
\e{proof}

\section{Extension to other directed graphs}
\label{sect:gen}
In this short section, we explain how the proof method can be generalized to other directed graphs. After some general remarks, we focus for definiteness on the specific example of the \textit{honeycomb lattice} which is commonplace in applications; see, e.g., \cite{lemm2020existence,pomata2020demonstrating}.

\subsection{General discussion}

Recall Definition \eqref{eq:HLdefn} of the Hamiltonian and observe the following two features: (i) the Hamiltonian is defined on oriented edges (the orientation matters because $P^j_{x,x+e_j}\neq P^j_{x+e_j,x}$) and (ii) the symmetry structure of the Euclidean lattice enters because it allows to set interactions on collinear edges equal. Point (ii) is no longer available for a general directed graph which does not necessarily have obvious ``partner'' directions such as the collinear ones in the Euclidean case, though, other lattice graphs come with their own symmetries. 

At any rate, also without symmetries, we can define a random Hamiltonian on a general $k$-regular directed graph by independently sampling the prototypical $k$ local interactions $P^1,\ldots,P^k$ at any vertex at random. That is, all local interactions are sampled independently, which is different from the construction in the Euclidean case. Once a random translation-invariant Hamiltonian is defined in this way, the argument can be adapted. This is possible because we are able to ensure frustration-freeness for general graphs via Lemma \ref{lm:FF} thanks to the cluster expansion. It is notationally slightly cumbersome to formulate an explicit construction of the random translation-invariant Hamiltonian on general $k$-regular directed graphs after the local interaction prototypes $P^1,\ldots,P^k$ have been sampled. Hence, we find it most instructive to consider an explicit example and so we focus on the honeycomb lattice in the following.

\subsection{The honeycomb lattice}
Let $\mathbb H_L=(K,E)$ denote the honeycomb lattice wrapped on an $L\times L$ torus. We place a qudit $\C^d$ at each site of $\mathbb H_L$, so we consider the Hilbert space $\bigotimes_{x\in \mathbb H_L} \C^d$. Fix any orientation of the edges. Let $r\geq 1$. As before, let $P^1,P^2,P^3:\C^d\otimes \C^d\to\C^d\otimes \C^d$ denote three i.i.d.\ random rank $r$ projections with the distribution of each $P^j$ given as the projection onto the first $r$ columns of a Haar-random orthogonal matrix. The random translation-invariant Hamiltonian on the honeycomb lattice is given by
$$
H_{\mathbb H_L}=\sum_{\vec{e}\in E} P^{t(\vec{e})}_{\vec{e}}
$$
Here, for any directed edge $\vec{e}=(x,y)$, we set $P^{j}_{\vec{e}}=P_{x,y}\otimes \mathrm{Id}_{\mathbb H_L\setminus \{x,y\}}$ and we introduced the type $t(\vec{e})$,
$$
t(\vec{e})=
\be{cases}
1,\qquad \mathrm{span}\{y-x\}\textnormal{ parallel to $(1,0)$-axis},\\
2,\qquad \mathrm{span}\{y-x\}\textnormal{ at angle $\frac{\pi}{3}$ to $(1,0)$-axis},\\
3,\qquad \mathrm{span}\{y-x\}\textnormal{ at angle $\frac{2\pi}{3}$ to $(1,0)$-axis}.
\e{cases}
$$

\be{thm}[Honeycomb lattice]
Assume that 
$$
r\leq \min\l\{\frac{\lfloor\frac{d^2}{4}\rfloor}{3},\frac{e^{-1}}{5}d^2\r\}
$$
Then, there exist $c>0$ and a positive probability $p>0$ such that
$$
\mathbb P(\gam_{\mathrm{loc},\Lam_L}\geq c)\geq p>0.
$$
\e{thm}

\be{proof}
Since $r\leq \frac{e^{-1}}{5}d^2$, we can apply Lemma \ref{lm:FF} to conclude that $\mathbb H_L$ is frustration-free. The proof of the finite-size criterion in Proposition \ref{prop:knabelocal} generalizes straightforwardly to any frustration-free Hamiltonian on a bounded degree graph; in that case, $\gam_3$ is the minimal gap that occurs when restricting the corresponding Hamiltonian to $3$ connected sites. Using $r\leq \frac{\lfloor\frac{d^2}{4}\rfloor}{3}$, the good projections $\tilde P^1,\tilde P^2,\tilde P^3$ can be defined exactly as in \eqref{eq:tildePdefn} in terms of the same good vectors \eqref{eq:wchoice}. The same argument as before implies \eqref{eq:tildePvanish}. Then we use \cite[Lemma 7]{lemm2019gaplessness} as the basis of the perturbative argument. The details are analogous to the Euclidean lattice case and are hence ommitted.
\e{proof}

\section*{Acknowledgments}
We are grateful to Anurag Anshu and Zeph Landau for kindly informing us of their observation \cite{AL} that the main result of \cite{anshu2022area} extends to degenerate ground spaces in the form of Theorem \ref{thm:ALcrit}. We acknowledge support by Open Access Publishing Fund of University of Tübingen.

\bibliographystyle{quantum}
\bibliography{gaps}

\begin{thebibliography}{10}

\bibitem{arad2017rigorous}
Itai Arad, Zeph Landau, Umesh Vazirani, and Thomas Vidick.
\newblock ``{Rigorous RG algorithms and area laws for low energy eigenstates in
  1D}''.
\newblock
  \href{https://dx.doi.org/https://doi.org/10.1007/s00220-017-2973-z}{Communications
  in Mathematical Physics {\bf 356}, 65--105}~(2017).

\bibitem{hastings2007area}
Matthew~B Hastings.
\newblock ``An area law for one-dimensional quantum systems''.
\newblock
  \href{https://dx.doi.org/https://doi.org/10.1088/1742-5468/2007/08/p08024}{Journal
  of statistical mechanics: theory and experiment {\bf 2007}, P08024}~(2007).

\bibitem{hastings2006spectral}
Matthew~B Hastings and Tohru Koma.
\newblock ``Spectral gap and exponential decay of correlations''.
\newblock
  \href{https://dx.doi.org/https://doi.org/10.1007/s00220-006-0030-4}{Communications
  in mathematical physics {\bf 265}, 781--804}~(2006).

\bibitem{nachtergaele2006lieb}
Bruno Nachtergaele and Robert Sims.
\newblock ``{Lieb-Robinson bounds and the exponential clustering theorem}''.
\newblock
  \href{https://dx.doi.org/https://doi.org/10.1007/s00220-006-1556-1}{Communications
  in mathematical physics {\bf 265}, 119--130}~(2006).

\bibitem{bachmann2012automorphic}
Sven Bachmann, Spyridon Michalakis, Bruno Nachtergaele, and Robert Sims.
\newblock ``Automorphic equivalence within gapped phases of quantum lattice
  systems''.
\newblock
  \href{https://dx.doi.org/https://doi.org/10.1007/s10955-015-1260-7}{Communications
  in Mathematical Physics {\bf 309}, 835--871}~(2012).

\bibitem{hastings2004lieb}
Matthew~B Hastings.
\newblock ``{Lieb-Schultz-Mattis in higher dimensions}''.
\newblock
  \href{https://dx.doi.org/https://doi.org/10.1103/physrevb.69.104431}{Physical
  review b {\bf 69}, 104431}~(2004).

\bibitem{nachtergaele2019quasi}
Bruno Nachtergaele, Robert Sims, and Amanda Young.
\newblock ``{Quasi-locality bounds for quantum lattice systems. I.
  Lieb-Robinson bounds, quasi-local maps, and spectral flow automorphisms}''.
\newblock \href{https://dx.doi.org/https://doi.org/10.1063/1.5095769}{Journal
  of Mathematical Physics {\bf 60}, 061101}~(2019).

\bibitem{bravyi2010topological}
Sergey Bravyi, Matthew~B Hastings, and Spyridon Michalakis.
\newblock ``Topological quantum order: stability under local perturbations''.
\newblock \href{https://dx.doi.org/https://doi.org/10.1063/1.3490195}{Journal
  of mathematical physics {\bf 51}, 093512}~(2010).

\bibitem{del2021lie}
Simone Del~Vecchio, J{\"u}rg Fr{\"o}hlich, Alessandro Pizzo, and Stefano Rossi.
\newblock ``Lie--schwinger block-diagonalization and gapped quantum chains with
  unbounded interactions''.
\newblock
  \href{https://dx.doi.org/https://doi.org/10.1007/s00220-020-03878-y}{Communications
  in Mathematical Physics {\bf 381}, 1115--1152}~(2021).

\bibitem{de2019persistence}
Wojciech De~Roeck and Manfred Salmhofer.
\newblock ``Persistence of exponential decay and spectral gaps for interacting
  fermions''.
\newblock
  \href{https://dx.doi.org/https://doi.org/10.1007/s00220-018-3211-z}{Communications
  in Mathematical Physics {\bf 365}, 773--796}~(2019).

\bibitem{michalakis2013stability}
Spyridon Michalakis and Justyna~P Zwolak.
\newblock ``{Stability of frustration-free Hamiltonians}''.
\newblock
  \href{https://dx.doi.org/https://doi.org/10.1007/s00220-013-1762-6}{Communications
  in Mathematical Physics {\bf 322}, 277--302}~(2013).

\bibitem{nachtergaele2021stability}
Bruno Nachtergaele, Robert Sims, and Amanda Young.
\newblock ``Stability of the bulk gap for frustration-free topologically
  ordered quantum lattice systems''~(2021).

\bibitem{haldane1983continuum}
F~Duncan~M Haldane.
\newblock ``{Continuum dynamics of the 1-D Heisenberg antiferromagnet:
  Identification with the $O(3)$ nonlinear sigma model}''.
\newblock
  \href{https://dx.doi.org/https://doi.org/10.1016/0375-9601(83)90631-x}{Physics
  letters a {\bf 93}, 464--468}~(1983).

\bibitem{haldane1983nonlinear}
F~Duncan~M Haldane.
\newblock ``{Nonlinear field theory of large-spin Heisenberg antiferromagnets:
  semiclassically quantized solitons of the one-dimensional easy-axis N{\'e}el
  state}''.
\newblock
  \href{https://dx.doi.org/https://doi.org/10.1103/physrevlett.50.1153}{Physical
  review letters {\bf 50}, 1153}~(1983).

\bibitem{lemm2019spectral}
Marius Lemm and Evgeny Mozgunov.
\newblock ``Spectral gaps of frustration-free spin systems with boundary''.
\newblock \href{https://dx.doi.org/https://doi.org/10.1063/1.5089773}{Journal
  of Mathematical Physics {\bf 60}, 051901}~(2019).

\bibitem{kitaev2006anyons}
Alexei Kitaev.
\newblock ``Anyons in an exactly solved model and beyond''.
\newblock
  \href{https://dx.doi.org/https://doi.org/10.1016/j.aop.2005.10.005}{Annals of
  Physics {\bf 321}, 2--111}~(2006).

\bibitem{levin2005string}
Michael~A Levin and Xiao-Gang Wen.
\newblock ``{String-net condensation: A physical mechanism for topological
  phases}''.
\newblock
  \href{https://dx.doi.org/https://doi.org/10.1103/physrevb.71.045110}{Physical
  Review B {\bf 71}, 045110}~(2005).

\bibitem{affleck1988valence}
Ian Affleck, Tom Kennedy, Elliott~H Lieb, and Hal Tasaki.
\newblock ``Valence bond ground states in isotropic quantum antiferromagnets''.
\newblock
  \href{https://dx.doi.org/https://doi.org/10.1007/bf01218021}{Communications
  in Mathematical Physics {\bf 115}, 477--528}~(1988).

\bibitem{miyake2011quantum}
Akimasa Miyake.
\newblock ``{Quantum computational capability of a 2D valence bond solid
  phase}''.
\newblock
  \href{https://dx.doi.org/https://doi.org/10.1016/j.aop.2011.03.006}{Annals of
  Physics {\bf 326}, 1656--1671}~(2011).

\bibitem{verstraete2004valence}
Frank Verstraete and J~Ignacio Cirac.
\newblock ``Valence-bond states for quantum computation''.
\newblock
  \href{https://dx.doi.org/https://doi.org/10.1103/physreva.70.060302}{Physical
  Review A {\bf 70}, 060302}~(2004).

\bibitem{wei2011affleck}
Tzu-Chieh Wei, Ian Affleck, and Robert Raussendorf.
\newblock ``{Affleck-Kennedy-Lieb-Tasaki state on a honeycomb lattice is a
  universal quantum computational resource}''.
\newblock
  \href{https://dx.doi.org/https://doi.org/10.1103/physrevlett.106.070501}{Physical
  review letters {\bf 106}, 070501}~(2011).

\bibitem{wei2014hybrid}
Tzu-Chieh Wei, Poya Haghnegahdar, and Robert Raussendorf.
\newblock ``Hybrid valence-bond states for universal quantum computation''.
\newblock
  \href{https://dx.doi.org/https://doi.org/10.1103/physreva.90.042333}{Physical
  Review A {\bf 90}, 042333}~(2014).

\bibitem{movassagh2016supercritical}
Ramis Movassagh and Peter~W Shor.
\newblock ``Supercritical entanglement in local systems: Counterexample to the
  area law for quantum matter''.
\newblock
  \href{https://dx.doi.org/https://doi.org/10.1073/pnas.1605716113}{Proceedings
  of the National Academy of Sciences {\bf 113}, 13278--13282}~(2016).

\bibitem{zhang2017novel}
Zhao Zhang, Amr Ahmadain, and Israel Klich.
\newblock ``Novel quantum phase transition from bounded to extensive
  entanglement''.
\newblock
  \href{https://dx.doi.org/https://doi.org/10.1073/pnas.1702029114}{Proceedings
  of the National Academy of Sciences {\bf 114}, 5142--5146}~(2017).

\bibitem{bachmann2014product}
Sven Bachmann and Bruno Nachtergaele.
\newblock ``Product vacua with boundary states and the classification of gapped
  phases''.
\newblock
  \href{https://dx.doi.org/https://doi.org/10.1007/s00220-014-2025-x}{Communications
  in Mathematical Physics {\bf 329}, 509--544}~(2014).

\bibitem{bachmann2015product}
Sven Bachmann, Eman Hamza, Bruno Nachtergaele, and Amanda Young.
\newblock ``{Product Vacua and Boundary State Models in $ d $ Dimensions}''.
\newblock \href{https://dx.doi.org/}{Journal of Statistical Physics {\bf 160},
  636--658}~(2015).

\bibitem{bishop2016spectral}
Michael Bishop, Bruno Nachtergaele, and Amanda Young.
\newblock ``{Spectral gap and edge excitations of d-dimensional PVBS models on
  half-spaces}''.
\newblock
  \href{https://dx.doi.org/https://doi.org/10.1007/s10955-016-1457-4}{Journal
  of Statistical Physics {\bf 162}, 1485--1521}~(2016).

\bibitem{nachtergaele1996spectral}
Bruno Nachtergaele.
\newblock ``The spectral gap for some spin chains with discrete symmetry
  breaking''.
\newblock
  \href{https://dx.doi.org/https://doi.org/10.1007/bf02099509}{Communications
  in mathematical physics {\bf 175}, 565--606}~(1996).

\bibitem{abdul2020class}
Houssam Abdul-Rahman, Marius Lemm, Angelo Lucia, Bruno Nachtergaele, and Amanda
  Young.
\newblock ``{A class of two-dimensional AKLT models with a gap}''.
\newblock
  \href{https://dx.doi.org/https://doi.org/10.1090/conm/741/14917}{Analytic
  Trends in Mathematical Physics {\bf 741}, 1--21}~(2020).

\bibitem{lemm2019gapped}
Marius Lemm and Bruno Nachtergaele.
\newblock ``{Gapped PVBS models for all species numbers and dimensions}''.
\newblock
  \href{https://dx.doi.org/https://doi.org/10.1142/s0129055x19500284}{Reviews
  in Mathematical Physics {\bf 31}, 1950028}~(2019).

\bibitem{lemm2020existence}
Marius Lemm, Anders~W Sandvik, and Ling Wang.
\newblock ``{Existence of a spectral gap in the Affleck-Kennedy-Lieb-Tasaki
  model on the hexagonal lattice}''.
\newblock
  \href{https://dx.doi.org/https://doi.org/10.1103/physrevlett.124.177204}{Physical
  review letters {\bf 124}, 177204}~(2020).

\bibitem{lemm2019aklt}
Marius Lemm, Anders~W Sandvik, and Sibin Yang.
\newblock ``{The AKLT model on a hexagonal chain is gapped}''.
\newblock
  \href{https://dx.doi.org/https://doi.org/10.1007/s10955-019-02410-4}{Journal
  of Statistical Physics {\bf 177}, 1077--1088}~(2019).

\bibitem{pomata2019aklt}
Nicholas Pomata and Tzu-Chieh Wei.
\newblock ``{AKLT models on decorated square lattices are gapped}''.
\newblock
  \href{https://dx.doi.org/https://doi.org/10.1103/physrevb.100.094429}{Physical
  Review B {\bf 100}, 094429}~(2019).

\bibitem{pomata2020demonstrating}
Nicholas Pomata and Tzu-Chieh Wei.
\newblock ``{Demonstrating the Affleck-Kennedy-Lieb-Tasaki spectral gap on 2d
  degree-3 lattices}''.
\newblock
  \href{https://dx.doi.org/https://doi.org/10.1103/physrevlett.124.177203}{Physical
  review letters {\bf 124}, 177203}~(2020).

\bibitem{nachtergaele2021spectral}
Bruno Nachtergaele, Simone Warzel, and Amanda Young.
\newblock ``{Spectral Gaps and Incompressibility in a $\nu=1/3$ Fractional
  Quantum Hall System}''.
\newblock
  \href{https://dx.doi.org/https://doi.org/10.1007/s00220-021-03997-0}{Communications
  in Mathematical Physics {\bf 383}, 1093--1149}~(2021).

\bibitem{warzel2022bulk}
Simone Warzel and Amanda Young.
\newblock ``A bulk spectral gap in the presence of edge states for a truncated
  pseudopotential''.
\newblock In Annales Henri Poincar{\'e}.
\newblock
  \href{https://dx.doi.org/https://doi.org/10.1007/s00023-022-01210-z}{Pages
  1--46}.
\newblock Springer~(2022).

\bibitem{bravyi2015gapped}
Sergey Bravyi and David Gosset.
\newblock ``Gapped and gapless phases of frustration-free spin-$1/2$ chains''.
\newblock \href{https://dx.doi.org/https://doi.org/10.1063/1.4922508}{Journal
  of Mathematical Physics {\bf 56}, 061902}~(2015).

\bibitem{lemm2019gaplessness}
Marius Lemm.
\newblock ``Gaplessness is not generic for translation-invariant spin chains''.
\newblock
  \href{https://dx.doi.org/https://doi.org/10.1103/physrevb.100.035113}{Physical
  Review B {\bf 100}, 035113}~(2019).

\bibitem{movassagh2017generic}
Ramis Movassagh.
\newblock ``{Generic local Hamiltonians are gapless}''.
\newblock
  \href{https://dx.doi.org/https://doi.org/10.1103/physrevlett.119.220504}{Physical
  review letters {\bf 119}, 220504}~(2017).

\bibitem{movassagh2010unfrustrated}
Ramis Movassagh, Edward Farhi, Jeffrey Goldstone, Daniel Nagaj, Tobias~J
  Osborne, and Peter~W Shor.
\newblock ``Unfrustrated qudit chains and their ground states''.
\newblock
  \href{https://dx.doi.org/https://doi.org/10.1103/physreva.82.012318}{Physical
  Review A {\bf 82}, 012318}~(2010).

\bibitem{sattath2016local}
Or~Sattath, Siddhardh~C Morampudi, Chris~R Laumann, and Roderich Moessner.
\newblock ``{When a local Hamiltonian must be frustration-free}''.
\newblock
  \href{https://dx.doi.org/https://doi.org/10.1073/pnas.1519833113}{Proceedings
  of the National Academy of Sciences {\bf 113}, 6433--6437}~(2016).

\bibitem{kotecky1986cluster}
Roman Koteck{\`y} and David Preiss.
\newblock ``Cluster expansion for abstract polymer models''.
\newblock
  \href{https://dx.doi.org/https://doi.org/10.1007/bf01211762}{Communications
  in Mathematical Physics {\bf 103}, 491--498}~(1986).

\bibitem{koch2015affleck}
Maciej Koch-Janusz, DI~Khomskii, and Eran Sela.
\newblock ``{Affleck-Kennedy-Lieb-Tasaki State on a Honeycomb Lattice from
  $t_{2g}$ Orbitals}''.
\newblock
  \href{https://dx.doi.org/https://doi.org/10.1103/physrevlett.114.247204}{Physical
  Review Letters {\bf 114}, 247204}~(2015).

\bibitem{anshu2022area}
Anurag Anshu, Itai Arad, and David Gosset.
\newblock ``{An area law for 2D frustration-free spin systems}''.
\newblock In Proceedings of the 54th Annual ACM SIGACT Symposium on Theory of
  Computing.
\newblock
  \href{https://dx.doi.org/https://doi.org/10.1145/3519935.3519962}{Pages
  12--18}.
\newblock ~(2022).

\bibitem{yarotsky2005uniqueness}
DA~Yarotsky.
\newblock ``Uniqueness of the ground state in weak perturbations of
  non-interacting gapped quantum lattice systems''.
\newblock
  \href{https://dx.doi.org/https://doi.org/10.1007/s10955-004-8780-x}{Journal
  of statistical physics {\bf 118}, 119--144}~(2005).

\bibitem{bausch2020undecidability}
Johannes Bausch, Toby~S Cubitt, Angelo Lucia, and David Perez-Garcia.
\newblock ``Undecidability of the spectral gap in one dimension''.
\newblock
  \href{https://dx.doi.org/https://doi.org/10.1103/physrevx.10.031038}{Physical
  Review X {\bf 10}, 031038}~(2020).

\bibitem{cubitt2015undecidability}
Toby~S Cubitt, David Perez-Garcia, and Wolf~Michael M.
\newblock ``Undecidability of the spectral gap''.
\newblock \href{https://dx.doi.org/https://doi.org/10.1038/nature16059}{Nature
  {\bf 528}, 207--211}~(2015).

\bibitem{AL}
Anurag Anshu and Zeph Landau.
\newblock personal communication.

\bibitem{abrahamsen2020sharp}
Nilin Abrahamsen.
\newblock ``{Sharp implications of AGSPs for degenerate ground
  spaces}''~(2020).

\bibitem{knabe1988energy}
Stefan Knabe.
\newblock ``{Energy gaps and elementary excitations for certain VBS-quantum
  antiferromagnets}''.
\newblock \href{https://dx.doi.org/https://doi.org/10.1007/bf01019721}{Journal
  of statistical physics {\bf 52}, 627--638}~(1988).

\bibitem{anshu2020improved}
Anurag Anshu.
\newblock ``Improved local spectral gap thresholds for lattices of finite
  size''.
\newblock
  \href{https://dx.doi.org/https://doi.org/10.1103/physrevb.101.165104}{Physical
  Review B {\bf 101}, 165104}~(2020).

\bibitem{gosset2016local}
David Gosset and Evgeny Mozgunov.
\newblock ``Local gap threshold for frustration-free spin systems''.
\newblock \href{https://dx.doi.org/https://doi.org/10.1063/1.4962337}{Journal
  of Mathematical Physics {\bf 57}, 091901}~(2016).

\bibitem{lemm2020finite}
Marius Lemm.
\newblock ``{Finite-size criteria for spectral gaps in D-dimensional quantum
  spin systems}''.
\newblock
  \href{https://dx.doi.org/https://doi.org/10.1090/conm/741/14923}{Analytic
  trends in mathematical physics {\bf 741}, 121}~(2020).

\bibitem{lemm2022quantitatively}
Marius Lemm and David Xiang.
\newblock ``Quantitatively improved finite-size criteria for spectral gaps''.
\newblock
  \href{https://dx.doi.org/https://doi.org/10.1088/1751-8121/ac7989}{Journal of
  Physics A: Mathematical and Theoretical {\bf 55}, 295203}~(2022).

\bibitem{mayer1937statistical}
Joseph~E Mayer.
\newblock ``{The statistical mechanics of condensing systems. I}''.
\newblock \href{https://dx.doi.org/https://doi.org/10.1063/1.1749933}{The
  Journal of Chemical Physics {\bf 5}, 67--73}~(1937).

\bibitem{ursell1927evaluation}
HD~Ursell.
\newblock ``{The evaluation of Gibbs' phase-integral for imperfect gases}''.
\newblock In Mathematical Proceedings of the Cambridge Philosophical Society.
\newblock
  \href{https://dx.doi.org/https://doi.org/10.1017/s0305004100011191}{Volume~23,
  pages 685--697}.
\newblock Cambridge University Press~(1927).

\bibitem{groeneveld1962two}
J~Groeneveld.
\newblock ``Two theorems on classical many-particle systems''.
\newblock
  \href{https://dx.doi.org/https://doi.org/10.1016/0031-9163(62)90198-1}{Phys.
  Letters{\bf 3}}~(1962).

\bibitem{ruelle1999statistical}
David Ruelle.
\newblock ``{Statistical mechanics: Rigorous results}''.
\newblock \href{https://dx.doi.org/https://doi.org/10.1142/4090}{World
  Scientific}. ~(1999).

\bibitem{fannes1992finitely}
Mark Fannes, Bruno Nachtergaele, and Reinhard~F Werner.
\newblock ``Finitely correlated states on quantum spin chains''.
\newblock
  \href{https://dx.doi.org/https://doi.org/10.1007/bf02099178}{Communications
  in mathematical physics {\bf 144}, 443--490}~(1992).

\bibitem{neumann1950functional}
J~Von Neumann.
\newblock ``{Functional Operators, Vol. II. The Geometry of Orthogonal Spaces
  (this is a reprint of mimeographed lecture notes first distributed in 1933)
  Annals of Math}''.
\newblock
  \href{https://dx.doi.org/https://doi.org/10.1515/9781400882250}{Studies Nr.
  22 Princeton Univ. Press}~(1950).

\end{thebibliography}

\appendix

\section{Improvement of the main result in \cite{lemm2019gaplessness}}\label{app:improvementL}
 \be{proof}[Proof of Theorem \ref{thm:improvementL}]
Let $r\leq \max\{\tfrac{d^2}{4},d-1\}$. Without loss of generality, we assume that $r=\lfloor\tfrac{d^2}{4}\rfloor\geq d$. We take for granted the entire analysis in \cite{lemm2019gaplessness} and only mention where the critical change occurs. We replace the vectors
$$
v_i=\vert1\otimes (i+1)\rangle, \qquad \textnormal{for }1\leq i\leq r
$$
(of which there exist at most $d-1$, hence the restriction), by the family
$$
w_{ij}=\vert i\otimes j\rangle,\qquad \textnormal{for $i\leq d$ odd and $j\leq d$ even}.
$$
The key property of these vectors is that thanks to the parity structure we have
$$
\tilde P_{1,2}\tilde P_{2,3}=0,\qquad \textnormal{where } \tilde P=\sum_{\substack{i\leq d \textnormal{ odd}\\ j\leq d \textnormal{ even}}} w_{i,j}.
$$
By orthogonality, the rank of $\tilde P$ is given by
$$
\mathrm{rank}\, \tilde P
=\be{cases}
\frac{d^2}{4},\qquad &\textnormal{ if $d$ is even},\\
\frac{d^2-1}{4},\qquad &\textnormal{ if $d$ is odd}.
\e{cases}
$$
or equivalently $\mathrm{rank}\, \tilde P=\lfloor\frac{d^2}{4}\rfloor$ which equals $r$ by assumption.

The rest of the argument goes exactly as in \cite{lemm2019gaplessness} \textit{mutatis mutandis}. We omit the details.
 \e{proof}

%

\end{document}